\newtheoremstyle{localthm}
	{5pt} 
	{5pt} 
	{\sl} 
	{} 
	{\bf} 
	{{\rm.}} 
	{.7em} 
	{} 
\theoremstyle{localthm}
\newtheorem{Theorem}{Theorem}
\newtheorem{Lemma}[Theorem]{Lemma}
\newtheoremstyle{localrem}
	{5pt} 
	{5pt} 
	{\rm} 
	{} 
	{\bf} 
	{{\rm.}} 
	{.7em} 
	{} 
\theoremstyle{localrem}
\newcommand{\ruck}[1]{\strut\hspace{#1cm}}
\def\aSt{a_\alpha^{\rm St}}
\def\bSt{b_\alpha^{\rm St}}
\def\logit{\mathop{\rm logit}\nolimits}
\def\Bin{\mathrm{Bin}}
\def\Poiss{\mathrm{Poiss}}
\def\Var{\mathrm{Var}}
\def\Ex{\mathrm{I\!E}}
\def\Pr{\mathrm{I\!P}}
\def\R{\mathbb{R}}
\def\LL{\mathcal{L}}
\def\XX{\mathcal{X}}
\begin{document}

\addtolength{\baselineskip}{+.5\baselineskip}

\title{Exact Confidence Bounds in Discrete Models\\ --
	Algorithmic Aspects of Sterne's Method}

\author{Lutz D\"umbgen\\University of Bern}

\date{December 2021}

\maketitle

\vfill

\paragraph{Abstract.}
In this manuscript we review two methods to construct exact confidence bounds for an unknown real parameter in a general class of discrete statistical models. These models include the binomial family, the Poisson family as well as distributions connected to odds ratios in two-by-two tables. In particular, we discuss Sterne's (1954) method in our general framework and present an explicit algorithm for the computation of the resulting confidence bounds. The methods are illustrated with various examples.

\vfill

\paragraph{AMS 2000 subject classification:}
62F25.

\paragraph{Key words:}
Binomial distribution, contingency table, discrete exponential family, log-concavity, Poisson distribution, nested binary search, odds ratio, two-by-two table.

\paragraph{Comment:}
This is an updated version of a mansucript from 2004.

\newpage

\section{Introduction}

Optimal statistical testing is well-understood in exponential families of distributions; see for instance Lehmann~(1986). This theory entails optimal one-sided confidence bounds for an unknown real parameter. However, at least in case of discrete distributions such as the binomial or Poisson there exist various proposals for computation of confidence intervals.

For the particular case of a binomial distribution $\Bin(n,p)$ with given sample size and an unknown parameter $p \in [0,1]$ there is an excellent review of approximate and exact coufidence bounds by Brown et al.\ (2001). Here we just mention the approximate bounds of Wilson (1927), the exact bounds of Clopper and Pearson (1934), Casella's (1986) refinement of the latter method and Sterne's (1954) approach; see also Crow (1956), Clunies-Ross (1958) and Blyth and Still (1983).

In Section~\ref{Setting} of the present paper we describe the general setting, namely, discrete single-parameter exponential families, generated by log-concave probability weights. In Section~\ref{Methods} we review one-sided confidence bounds and the resulting confidence intervals of Clopper and Pearson (1934). Moreover we describe Sterne's (1954) proposal. Section~\ref{Sterne} analyzes Sterne's method in more detail and presents a general algorithm for its implementation. The latter is an alternative to an algorithm of Baptista and Pike (1977), who considered the special case of odds ratios. In Section~\ref{Examples} we illustrate and compare the three versions of confidence intervals with various models and data examples. Longer proofs are deferred to Section~\ref{Proofs}.

\section{The general setting}
\label{Setting}

Throughout this paper we consider the following exponential family of discrete probability distributions. Let $\XX$ be an interval of integers, and for $x \in \XX$ let $w_x > 0$. We assume that
\[
	\sum_{x \in \XX} w_x e^{\theta x} < \infty
	\quad\text{for all} \ \theta \in \R .
\]
For any $\theta \in \R$ we define probability weights
\[
	f_\theta(x) \ := \ \frac{w_x e^{\theta x}}{\sum_{y \in \XX} w_y e^{\theta y}} ,
	\quad	x \in \XX .
\]
The corresponding distribution and cumulative distribution function are denoted by $P_\theta$ and $F_\theta$, respectively.

In case of $\inf(\XX) \in \XX$, we also include the parameter $\theta = -\infty$ and define $f_{-\infty}(x) := 1_{[x = \inf(\XX)]}$. Analogously, if $\sup(\XX) \in \XX$, we include $\theta = \infty$ and set $f_{\infty}(x) := 1_{[x = \sup(\XX)]}$. This leads to a parameter space $\Theta$ such that $\R \subset \Theta \subset [-\infty,\infty]$. In connection with Sterne's method we assume in addition that
\begin{equation}
\label{eq:log-concavity}
	\frac{w_{x+1}}{w_x} \ \
	\text{is strictly decreasing in} \
	x \in \XX ,
\end{equation}
where we set $w_z := 0$ for $z \in \mathbb{Z} \setminus \XX$. The reason for this assumption will become clear later. It entails that for any $\theta \in \mathbb{R}$,
\[
	\log f_\theta(x) \ = \ \mathrm{const}(\theta) + \log w_x + \theta x
\]
is strictly concave in $x \in \XX$.

Our goal is to construct confidence regions for $\theta$. Before we describe the methods, let us mention three examples for the present setting.

\paragraph{Binomial distributions.}
The distributions $\Bin(n,p)$, $p \in [0,1]$, fit into our framework with
\[
	\XX \ = \ \{0,1,\ldots,n\} ,	\quad
	w_x \ = \ \binom{n}{x} ,	\quad
	\theta \ = \ \logit(p) := \log \Bigl( \frac{p}{1 - p} \Bigr) \in [-\infty,\infty] .
\]

\paragraph{Poisson distributions.}
Another example are the distributions $\Poiss(\lambda)$, $\lambda \ge 0$. Here
\[
	\XX \ = \ \{0,1,2,\ldots\} ,	\quad
	w_x \ = \ \frac{1}{x!} ,	\quad
	\theta \ = \ \log(\lambda) \in [-\infty,\infty) .
\]

\paragraph{Odds ratios.}
Let $Y_1 \sim \Bin(n_1,p_1)$ and $Y_2 \sim \Bin(n_2,p_2)$ be stochastically independent, where $p_1, p_2 \in (0,1)$. Then the conditional distribution
\[
	\LL(Y_1 \,|\, Y_1 + Y_2 = s)
\]
is of the general type above, where
\[
	\XX \ = \ \bigl\{ \max(0,s - n_2), \ldots, \min(n_1,s) \bigr\} ,
	\quad
	w_x \ = \ \frac{1}{x! (n_1-x)! (s - x)! (n_2 - s + x)!} ,
\]
and $\theta$ is the log-odds ratio
\[
	\theta \
	= \ \log \Bigl( \frac{p_1(1 - p_2)}{(1 - p_1) p_2} \Bigr) \
	= \ \logit(p_1) - \logit(p_2) \in \R .
\]
This type of distribution arises in various applications involving two dichotomous variables and two-by-two tables.

\section{Three approaches to confidence sets}
\label{Methods}

In what follows let $X$ be a random variable with distribution $P_\theta$ and distribution function $F_\theta$, where $\theta$ is an unknown parameter in $\Theta$. Our goal is to construct a confidence set for $\theta$ with confidence level at least $1 - \alpha$ for a given $\alpha \in (0,1)$. That means, we are looking for a mapping $\XX \ni x \mapsto C_\alpha(x) \subset \Theta$ such that
\[
	\Pr_\eta(C_\alpha(X) \ni \eta) \ \ge \ 1 - \alpha
	\quad\text{for any} \ \eta \in \Theta .
\]
Here and in the sequel, probabilities and expectations in case of $\theta = \eta$ are denoted by $\Pr_\eta$ and $\Ex_\eta$, respectively.

\paragraph{One-sided confidence bounds.}
For a detailed and complete discussion of one-sided tests and confidence bounds we refer the reader to Lehmann~(1986). Here is just a description of the resulting confidence bounds in the present setting.

It is well-known that for any $\alpha \in (0,1)$ and $\eta \in \Theta$,
\[
	\Pr_\eta(F_\eta(X) > \alpha) \ \ge \ 1 - \alpha .
\]
Moreover, for fixed $x \in \XX \setminus \{\sup(\XX)\}$, the function $\Theta \ni \eta \mapsto F_\eta(x)$ is continuous and strictly decreasing with limits $F_{-\infty}(x) = 1$ and $F_{\infty}(x) = 0$. Consequently, a $(1 - \alpha)$-confidence region for $\theta$ is given by
\[
	C_\alpha(x) \
	:= \ \bigl\{ \eta \in \Theta : F_\eta(x) \ge \alpha \bigr\} \
	 = \ \Theta \cap \bigl[ -\infty, b_\alpha(x) \bigr] ,
\]
where $b_\alpha(x) := \infty$ in case of $x = \sup(\XX)$, while $b_\alpha(x)$ is the unique solution $\eta \in \Theta$ of the equation $F_\eta(x) = \alpha$ in case of $x < \sup(\XX)$.

Analogously, since
\[
	\Pr_\eta( F_\eta(X-1) < 1 - \alpha) \ \ge \ 1 - \alpha
\]
for any $\alpha \in (0,1)$ and $\eta \in \Theta$, an alternative $(1 - \alpha)$-confidence region for $\theta$ is given by
\[
	C_\alpha(x) \
	:= \ \bigl\{ \eta \in \Theta : F_\eta(x-1) \le 1 - \alpha \bigr\} \
	 = \ \Theta \cap \bigl[ a_\alpha(x), \infty \bigr] ,
\]
where $a_\alpha(x) := -\infty$ in case of $x = \inf(\XX)$, while $a_\alpha(x)$ is the unique solution $\eta \in \Theta$ of the equation $F_\eta(x-1) = 1-\alpha$ in case of $x > \inf(\XX)$.

An alternative representation of these confidence regions is in terms of p-values: The left-sided p-value for the null hypothesis that $\theta = \eta$ is given by $\pi^{\rm left}(X,\eta) := F_\eta(X)$, and
\[
	\Theta \cap \bigl[ -\infty, b_\alpha(X) \bigr] \
	= \ \bigl\{ \eta \in \Theta : \pi^{\rm left}(X,\eta) \ge \alpha \bigr\} .
\]
Analogously, the right-sided p-value for the null hypothesis that $\theta = \eta$ equals $\pi^{\rm right}(X,\eta) := 1 - F_\eta(X-1)$, and
\[
	\Theta \cap \bigl[ a_\alpha(X), \infty \bigr] \
	= \ \bigl\{ \eta \in \Theta : \pi^{\rm right}(X,\eta) \ge \alpha \bigr\} .
\]

\paragraph{Clopper and Pearson's confidence interval.}
Clopper and Pearson (1934) proposed to combine the lower and upper confidence bound with $\alpha/2$ in place of $\alpha$. This yields the $(1 - \alpha)$-confidence interval
\[
	C_\alpha(x) := \ \Theta \cap \bigl[ a_{\alpha/2}(x), b_{\alpha/2}(x) \bigr]
\]
for $\theta$. An alternative representation of this confidence interval is in terms of the two-sided p-value $\pi^{\rm two}(X,\eta) := 2 \min \bigl\{ \pi^{\rm left}(X,\eta), \pi^{\rm right}(X,\eta) \bigr\}$ is given by
\[
	\Theta \cap \bigl[ a_{\alpha/2}(X), b_{\alpha/2}(X) \bigr]
	\ = \ \bigl\{ \eta \in \Theta : \pi^{\rm two}(X,\eta) \ge \alpha \bigr\} .
\]
This method is easy to implement and offered by various software packages. However, these intervals tend to be too conservative. Various refinements have been proposed in special cases, especially for the binomial model; see for instance Casella (1986).

\paragraph{Sterne's confidence sets.}
An alternative and intuitively appealing approach is to choose for each hypothetical parameter $\eta \in \Theta$ an acceptance region $A_\eta \subset \XX$ with minimal cardinality such that
\[
	\sum_{x \in A_\eta} f_\eta(x) \ \ge \ 1 - \alpha .
\]
One can easily verify that $A_\eta$ may be constructed by successively adding points $x \in \XX$ with maximal probability weight $f_\eta(x)$. The resulting $(1 - \alpha)$-confidence set equals $C_\alpha(x) := \bigl\{ \eta \in \Theta : x \in A_\eta \bigr\}$.

This construction is essentially equivalent to the following procedure: For any hypothetical parameter $\eta \in \Theta$ and potential observation $x \in \XX$ define the p-value
\[
	\pi(x,\eta) \ := \ \sum_{y \in \XX} 1_{[f_\eta(y) \le f_\eta(x)]} f_\eta(y)
\]
for the null hypothesis that $\theta = \eta$. This yields the $(1 - \alpha)$-confidence set
\[
	C_\alpha^{\rm St}(x) \ := \ \{\eta \in \Theta : \pi(x,\eta) > \alpha\} .
\]

A problem with this approach is that the p-value function $\pi(x,\cdot)$ need not be unimodal. For instance, in case of Poisson distributions it never is; see Section~\ref{Examples}. Nevertheless one can compute the boundaries
\[
	\aSt(x) \ := \ \inf(C_\alpha^{\rm St}(x))
	\quad\text{and}\quad
	\bSt(x) \ := \ \sup(C_\alpha^{\rm St}(x))
\]
as explained in Section~\ref{Sterne} and obtains the $(1 - \alpha)$-confidence interval $\bigl[ \aSt(x), \bSt(x) \bigr]$ for $\theta$.

\section{Computation of Sterne's bounds}
\label{Sterne}

\subsection{Analytical properties of $\pi(x,\cdot)$}

Note that strict concavity of $\XX \ni x \mapsto \log w_x$ implies that the special parameters
\[
	\theta_{k,x} \ := \ \frac{\log w_x - \log w_k}{k - x}
\]
are strictly increasing in $k \in \XX \setminus \{x\}$. In addition we introduce $\theta_{\inf(\XX)-1,x} := -\infty$ and $\theta_{\sup(\XX)+1,x} := \infty$. These parameters $\theta_{k,x}$ play an important role in the following theorem, which summarizes crucial properties of the p-value function $\pi(x,\cdot)$. 

\begin{Theorem}
\label{Analysis}

\textbf{(a)} \ The function $\pi(x,\cdot)$ has discontinuities at the points $\theta_{k,x}$, $k \in \XX \setminus \{x\}$. Precisely,
\[
	\pi(x,\eta) \ = \ \begin{cases}
		1 & \text{if} \ \theta_{x-1,x} \le \eta \le \theta_{x+1,x} , \\
		1 - F_\eta(k-1) + F_\eta(x)
		  & \text{if} \ k > x+1 \ \text{and} \ \theta_{k-1,x} < \eta \le \theta_{k,x} , \\
		F_\eta(x)
		  & \text{if} \ \sup(\XX) < \infty \ \text{and} \ \eta > \theta_{\sup(\XX),x} , \\
		1 - F_\eta(x-1) + F_\eta(k)
		  & \text{if} \ k < x-1 \ \text{and} \ \theta_{k,x} \le \eta < \theta_{k+1,x} , \\
		1 - F_\eta(x-1)
		  & \text{if} \ \inf(\XX) > -\infty \ \text{and} \ \eta < \theta_{\inf(\XX),x} .
	\end{cases}
\]

\noindent
\textbf{(b)} \ $\pi(x,\theta_{k,x})$ is strictly decreasing in $k > x$ and strictly increasing in $k < x$.

\noindent
\textbf{(c)} \ For $k \in \XX \setminus \{x-1,x,x+1\}$ there exists a strictly concave function $\psi_{k,x}^{} : \mathbb{R} \to \mathbb{R}$ such that
\[
	\pi(x,\eta) \ = \ 1 - \exp(\psi_{k,x}^{}(\eta))
	\quad\text{for}\quad
	\eta \ \in \ \begin{cases}
		[\theta_{k,x}, \theta_{k+1,x}) & \text{if} \ k < x-1 , \\\\
		(\theta_{k-1,x}, \theta_{k,x}] & \text{if} \ k > x+1 .
	\end{cases}
\]
\end{Theorem}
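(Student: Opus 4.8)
Part~(a) will be a direct computation, part~(b) is then deduced from it, and part~(c) — the substantial one — is reduced to one variance inequality. For part~(a): for $y\in\XX\setminus\{x\}$ one has $f_\eta(y)\le f_\eta(x)\iff\log w_y+\eta y\le\log w_x+\eta x\iff\eta(y-x)\le(y-x)\theta_{y,x}$, so the condition is $\eta\le\theta_{y,x}$ when $y>x$ and $\eta\ge\theta_{y,x}$ when $y<x$. Since $k\mapsto\theta_{k,x}$ is strictly increasing, for every fixed $\eta$ the acceptance set $A_\eta:=\{y\in\XX:f_\eta(y)\le f_\eta(x)\}$ consists of a lower segment $\{y\le\kappa_-\}$, the point $x$, and an upper segment $\{y\ge\kappa_+\}$, where $\kappa_-=\max\{y<x:\theta_{y,x}\le\eta\}$ and $\kappa_+=\min\{y>x:\theta_{y,x}\ge\eta\}$, and (because $[\theta_{x-1,x},\infty]\cup[-\infty,\theta_{x+1,x}]=[-\infty,\infty]$) at least one of the two segments is ``full'', i.e.\ equals all of $\{y\in\XX:y<x\}$ resp.\ all of $\{y\in\XX:y>x\}$. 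Locating $\eta$ among the points $\theta_{k,x}$ then produces the five cases; e.g.\ $\theta_{k-1,x}<\eta\le\theta_{k,x}$ with $k>x+1$ forces $\kappa_+=k$ and a full lower segment, so $A_\eta=\{y\le x\}\cup\{y\ge k\}$ and $\pi(x,\eta)=F_\eta(x)+1-F_\eta(k-1)$, which is the second line. The two extreme lines correspond to $\eta$ below $\theta_{\inf(\XX),x}$ resp.\ above $\theta_{\sup(\XX),x}$, where one segment becomes empty.

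\smallskip\noindent
\textbf{(b)} Since $f_{\theta_{k,x}}(x)=f_{\theta_{k,x}}(k)$, strict concavity of $\log f_{\theta_{k,x}}$ makes $A_{\theta_{k,x}}=\{y\le x\}\cup\{y\ge k\}$, so by~(a) $\pi(x,\theta_{k,x})=\Pr_{\theta_{k,x}}(X\le x)+\Pr_{\theta_{k,x}}(X\ge k)$ for every $k>x$. Passing from $k$ to $k+1$ the parameter increases, $\theta_{k,x}<\theta_{k+1,x}$, so the first summand strictly decreases; it then suffices to show $\Pr_{\theta_{k,x}}(X\ge k)\ge\Pr_{\theta_{k+1,x}}(X\ge k+1)$, i.e.\ $F_{\theta_{k+1,x}}(k)\ge F_{\theta_{k,x}}(k-1)$. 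This inequality again uses~\eqref{eq:log-concavity}; it is obtained from the monotone-likelihood-ratio structure of $(P_\eta)$ — the ratio $f_{\theta_{k+1,x}}(y)/f_{\theta_{k,x}}(y)$ is increasing in $y$ — together with the normalizing identities $f_{\theta_{k,x}}(x)=f_{\theta_{k,x}}(k)$ and $f_{\theta_{k+1,x}}(x)=f_{\theta_{k+1,x}}(k+1)$. The case $k<x$ is mirror-symmetric.

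\smallskip\noindent
\textbf{(c)} Fix $k>x+1$ (the case $k<x-1$ is symmetric) and set $S:=\{x+1,\dots,k-1\}$, a nonempty interval with $S\subsetneq\XX$ (it omits $x$). By~(a), $1-\pi(x,\eta)=F_\eta(k-1)-F_\eta(x)=\Pr_\eta(X\in S)$ for $\eta\in(\theta_{k-1,x},\theta_{k,x}]$, so the candidate is
\[
\psi_{k,x}(\eta)\ :=\ \log\Pr_\eta(X\in S)\ \in\ (-\infty,0),\qquad\eta\in\R,
\]
which is finite everywhere and agrees with $\log(1-\pi(x,\eta))$ on the required interval. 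Thus everything reduces to the claim that $\eta\mapsto\log\Pr_\eta(X\in S)$ is strictly concave on $\R$ for every interval $S$ with $\emptyset\ne S\subsetneq\XX$. Differentiating twice gives $\psi_{k,x}''(\eta)=\Var_\eta(X\mid X\in S)-\Var_\eta(X)$, the conditional variance being that of $P_\eta(\cdot\mid X\in S)$, i.e.\ of the member at $\eta$ of the exponential subfamily with weights $(w_y)_{y\in S}$; hence the claim is equivalent to
\[
\Var_\eta(X\mid X\in S)\ <\ \Var_\eta(X)\qquad\text{for all }\eta\in\R .
\]
Here~\eqref{eq:log-concavity} is indispensable: without log-concavity this can be false (a two-point-plus-spike weight gives $\Var_\eta(X\mid X\in S)>\Var_\eta(X)$).

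\smallskip\noindent
To prove the variance inequality, fix $\eta$, abbreviate $v_y:=w_ye^{\eta y}$ (still strictly log-concave), and pick intervals $S=T_0\subsetneq T_1\subsetneq\cdots\uparrow\XX$ with $T_{i+1}$ obtained from $T_i$ by adjoining one neighbouring integer of $\XX$. Since $\Var_\eta(X\mid X\in T_i)\to\Var_\eta(X)$, it is enough that each step strictly increases the conditional variance. By symmetry let $T=\{a,\dots,b\}$, $T'=T\cup\{b+1\}$ with $b+1\in\XX$; let $Q$ be the law $\propto(v_y)_{y\in T}$ and $\varepsilon:=v_{b+1}\big/\sum_{y\in T'}v_y$. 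A short computation gives
\[
\Var_{Q'}(X)\ =\ (1-\varepsilon)\Var_Q(X)+\varepsilon(1-\varepsilon)\bigl(b+1-\Ex_QX\bigr)^2 ,
\]
so the strict increase is equivalent to $(1-\varepsilon)(b+1-\Ex_QX)^2>\Var_Q(X)$. Log-concavity enters twice: summing $v_{b+1}v_y\le v_bv_{y+1}$ over $a-1\le y\le b-1$ yields $v_{b+1}(W-v_b)\le v_bW$ with $W:=\sum_{y\in T}v_y$, hence $\varepsilon\le v_b/W=Q(b)$, i.e.\ $1-\varepsilon\ge 1-Q(b)$; and, $Q$ being log-concave and therefore positive on all of $\{a,\dots,b\}$, it remains to establish, for an arbitrary log-concave probability weight $Q$ on $\{a,\dots,b\}$,
\[
\bigl(1-Q(b)\bigr)\bigl(b+1-\Ex_QX\bigr)^2\ >\ \Var_Q(X),
\]
which upon setting $J:=b-X$ becomes $\Pr_Q(J\ge1)\,(1+\Ex_QJ)^2>\Var_Q(J)$ for a log-concave weight on $\{0,1,2,\dots\}$. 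This last, deceptively simple inequality is the point I expect to be the main obstacle: it is \emph{tight} — for (truncated) geometric weights the two sides agree in the limit, and differ only to second order in $\Pr_Q(J\ge1)$ when $Q$ concentrates near $b$ — so its proof must exploit log-concavity quantitatively, e.g.\ via the sharp bound $\Var_Q(J)\le\Ex_QJ\,(\Ex_QJ+1)$ for log-concave weights on $\{0,1,2,\dots\}$ together with the constraint that log-concavity forces between $\Pr_Q(J\ge1)$ and $\Ex_Q(J\mid J\ge1)$.
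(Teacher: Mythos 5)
Your part~(a) is the paper's argument, and your reduction of part~(c) to strict concavity of $\eta\mapsto\log\Pr_\eta(X\in S)$, i.e.\ to $\Var_\eta(X\mid X\in S)<\Var_\eta(X)$, and then via the one--point--extension identity to an inequality of the form $(1-\varepsilon)(b+1-\Ex_QX)^2>\Var_Q(X)$, is exactly the paper's route (its inequality \eqref{Crucial2}). But at the two places where the real work sits you stop short, and those are precisely the points Section~\ref{Proofs} of the paper is devoted to.

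In (b) you replace the claim that $\pi(x,\theta_{k,x})$ decreases by the summand-wise claim that $F_{\theta_{k,x}}(x)$ decreases \emph{and} $\Pr_{\theta_{k+1,x}}(X\ge k+1)\le\Pr_{\theta_{k,x}}(X\ge k)$, and you justify the latter only by invoking monotone likelihood ratios plus the normalizing identities. MLR by itself gives $\Pr_{\theta_{k+1,x}}(X\ge k+1)\ge\Pr_{\theta_{k,x}}(X\ge k+1)$, i.e.\ it pushes in the wrong direction; what you actually need is the quantitative bound $\Pr_{\theta_{k+1,x}}(X\ge k+1)-\Pr_{\theta_{k,x}}(X\ge k+1)\le f_{\theta_{k,x}}(k)$, which does not follow from MLR alone. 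Moreover your summand-wise inequality is strictly stronger than the theorem requires: when the weights are nearly log-linear to the right of $k$ the two tail probabilities agree to first order and the strict decrease of $\pi$ comes entirely from the $F_\eta(x)$ term, so your inequality is at best tight and needs a genuine proof --- it is not even clear that it is true. The paper avoids this by bounding the combined quantity: after normalizing to $x=0$, $w_0=w_k=1$ it reduces the claim to $A<B$ for two explicit ratios and proves that by comparison with an extremal log-linear configuration $\bar w_z$ (an MLR argument against a comparison set of weights, not between the two parameter values), computing $\bar A$ and $\bar B$ in closed form.

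In (c) you yourself flag that the final inequality $(1-Q(b))(b+1-\Ex_QX)^2>\Var_Q(X)$ for log-concave $Q$ is ``the main obstacle'' and leave it unproved; as you observe it degenerates to an equality for geometric weights, so no crude bound will do, and your preliminary weakening of $1-\varepsilon$ to $1-Q(b)$ makes the remaining margin even thinner. This missing step is exactly the content of the paper's Lemmas~\ref{MomentInequality1} and~\ref{MomentInequality2}: first replace $P_o$ by the log-linear law $P_*$ on $\{a,\dots,b\}$ with the same mean, which has larger variance and larger endpoint masses (a sign-change/covariance argument), and then verify the extension inequality for log-linear weights by the explicit computation $\Var(Q_m)=B^2-B-(m+1)^2/(2\cosh((m+1)\delta)-2)$, which is increasing in $m$. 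Until you either prove your stronger tail inequality in (b) and your terminal inequality in (c), or fall back on arguments of this kind, the proof has genuine gaps at its two critical points.
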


Note that part~(a) of this theorem implies that at each discontinuity point $\theta_{k,x}$ the p-value function $\pi(x,\cdot)$ behaves as follows:
\[
	\pi(x,\theta_{k,x}) \ > \ \begin{cases}
		\pi(x,\theta_{k,x} \, +) & \text{if} \ k > x , \\
		\pi(x,\theta_{k,x} \, -) & \text{if} \ k < x ,
	\end{cases}
\]
where $\pi(x,\theta_{k,x}\,\pm)$ denotes the right-sided and left-sided limit of $\pi(x,\cdot)$ at $\theta_{k,x}$, respectively.

\subsection{Computation of $\bSt(x)$}

Since computing the lower confidence bound $\aSt(x)$ is analogous to computing the upper bound $\bSt(x)$, we describe the general procedure only for the latter one.

At first we assume that the support set $\XX$ is bounded. Specific comments about the case of unbounded $\XX$ are made later.

\paragraph{First stage.}
Note first that $\bSt(x) = \infty$ in case of $x = \max(\XX)$, because then $\pi(x,\infty) = 1$.

Thus we restrict our attention to the case that $x < \max(\XX)$. It follows from Theorem~\ref{Analysis}~(a-b) that there exists an integer $k_\alpha(x) \in \XX$ with $k_\alpha(x) \ge x+1$ such that
\[
	\pi(x,\theta_{k,x}^{}) \ \begin{cases}
		\ge \ \alpha & \text{if} \ k \in \{x+1, \ldots, k_\alpha(x)\} , \\
		<   \ \alpha & \text{if} \ k > k_\alpha(x), k \in \XX .
	\end{cases}
\]
This integer may be obtained by a naive linear search or a suitable binary search. Table~\ref{Stage One} contains pseudocode for the determination of $k_\alpha(x)$.

\begin{table}
\centering
\begin{tabular}{|l|} \hline
\ruck{0}	$k' \leftarrow \max(\XX)$\\
\ruck{0}	$\pi' \leftarrow \pi(x,\theta_{k',x})$\\
\ruck{0}	if $\pi' \ge \alpha$ then\\
\ruck{1}		$k \leftarrow k'$\\
\ruck{0}	else\\
\ruck{1}		$k \leftarrow x+1$\\
\ruck{1}		$\pi \leftarrow 1$\\
\ruck{1}		while $k' > k+1$ do\\
\ruck{2}			$k'' \leftarrow \lfloor (k + k')/2 \rfloor$\\
\ruck{2}			$\pi'' \leftarrow \pi(x,\theta_{k'',x})$\\
\ruck{2}			if $\pi'' \ge \alpha$ then\\
\ruck{3}				$k \leftarrow  k''$\\
\ruck{2}			else\\
\ruck{3}				$k' \leftarrow  k''$\\
\ruck{2}			end if\\
\ruck{1}		end while\\
\ruck{0}	end if\\\hline
\end{tabular}
\caption{Computation of $k = k_\alpha(x)$.}
\label{Stage One}
\end{table}

\paragraph{Second stage.}
If $k_\alpha(x) = \max(\XX)$, then
\[
	\pi(x,\eta) \ = \ F_\eta(x)
	\quad\text{for} \ \theta_{\max(\XX),x} < \eta < \infty .
\]
Thus
\[
	\bSt(x) \ = \ \begin{cases}
		\theta_{\max(\XX),x} & \text{if} \ F_{\theta_{\max(\XX),x}}(x) \le \alpha , \\
		b_\alpha(x)   & \text{else} .
	\end{cases}
\]

Now we consider the case that $k = k_\alpha(x) < \max(\XX)$. According to Theorem~\ref{Analysis}~(c), $\pi(x,\eta)$ may be written as $1 - \exp(\psi(\eta))$ for some concave function $\psi = \psi_{k+1,x}$ on $(\theta_{k,x}, \theta_{k+1,x}]$, where $\pi(x,\theta_{k+1, x}) < \alpha$. Thus, if the right-sided limit
\[
	\pi(x, \theta_{k,x} \, +)
	\ = \ 1 - \exp(\psi(\theta_{k,x}))
	\ = \ 1 - F_{\theta_{k,x}}(k) + F_{\theta_{k,x}}(x)
\]
is not greater than $\alpha$, then $\pi(x, \eta) < \alpha$ for all $\eta > \theta_{k,x}$, whence $\bSt(x) = \theta_{k,x}$. Otherwise $\bSt(x)$ is the unique number $b \in (\theta_{k,x}, \theta_{k+1,x})$ such that $\pi(x, b) = 1 - F_a(k) + F_a(x) = \alpha$. This number can also be determined numerically by a suitable version of binary search.

Table~\ref{Stage Two} contains corresponding pseudocode. The algorithm involves a precision number $\delta > 0$, and it terminates with a number $b \in [\bSt(x), \bSt(x) + \delta]$ such that $\pi(x,b) \in [\alpha-\delta,\alpha]$.

\begin{table}
\centering
\begin{tabular}{|l|} \hline
\ruck{0}	$b'   \leftarrow \theta_{k,x}$\\
\ruck{0}	$\pi' \leftarrow 1 - F_{b'}(k) + F_{b'}(x)$\\
\ruck{0}	if $\pi' \le \alpha$ then\\
\ruck{1}		$b \leftarrow b'$\\
\ruck{0}	else\\
\ruck{1}		$b   \leftarrow \theta_{k+1,x}$\\
\ruck{1}		$\pi \leftarrow 1 - F_b(k) + F_b(x)$\\
\ruck{1}		while $b - b' > \delta$ or $\pi' - \pi > \delta$ do\\
\ruck{2}			if $b < \infty$ then\\
\ruck{3}				$b'' \leftarrow (b' + b)/2$\\
\ruck{2}			else\\
\ruck{3}				$b'' \leftarrow b' + 1$\\
\ruck{2}			end if\\
\ruck{2}			$\pi'' \leftarrow 1 - F_{b''}(k) + F_{b''}(x)$\\
\ruck{2}			if $\pi'' > \alpha$ then\\
\ruck{3}				$b'   \leftarrow b''$\\
\ruck{3}				$\pi' \leftarrow \pi''$\\
\ruck{2}			else\\
\ruck{3}				$b   \leftarrow b''$\\
\ruck{3}				$\pi \leftarrow \pi''$\\
\ruck{2}			end if\\
\ruck{1}		end while\\
\ruck{0}	end if\\\hline
\end{tabular}
\caption{Approximation $b$ of $\bSt(x)$, where $k = k_\alpha(x)$.}
\label{Stage Two}
\end{table}

\paragraph{The case of unbounded $\XX$.}
If the support set $\XX$ is unbounded, we assume in addition to the previously mentioned conditions that for any fixed $x \in \XX$,
\begin{equation}
	\lim_{|k| \to \infty, \, k \in \XX} \, \pi(x, \theta_{k,x}) \ = \ 0 .
	\label{Analysis2}
\end{equation}
Then the integer $k_\alpha(x)$ is still well-defined and may be obtained by a suitable search strategy. The second stage of the algorithm is exactly the same as before.

In the Poisson example, our only example with unbounded $\XX$, Condition~\eqref{Analysis2} is indeed satisfied. It follows from Stirling's formula that for $k > x$,
\[
	\theta_{k,x}^{} \ = \ \frac{\log(k!) - \log(x!)}{k - x}
	\ = \ \log k - 1 + o(1)
	\quad\text{as} \ k \to \infty .
\]
Therefore, as $k \to \infty$, the corresponding Poisson parameter equals $\exp(\theta_{k,x}) = k/e + o(k)$, and it follows from Tshebyshev's inequality that
\begin{align*}
	\pi(x,\theta_{k,x}) \
	&= \ 1 - \Poiss(\exp(\theta_{k,x})) \bigl( \{x+1,\ldots,k-1\} \bigr) \\
	&\le \ \Poiss(\exp(\theta_{k,x})) \bigl( \bigl\{ j \in \XX
		: |j - \exp(\theta_{k,x})| \ge k/e + o(k) \bigr\} \bigr) \\
	&\le \ \frac{\Var(\Poiss(\exp(\theta_{k,x})))}{(k/e + o(k))^2} \\
	&= \ O(k^{-1}) .
\end{align*}

\section{Examples}
\label{Examples}

\paragraph{Binomial parameters.}
The interval $[\theta_{x-1,x}, \theta_{x+1,x}]$ on which $\pi(x,\cdot) \equiv 1$ equals
\[
	\Bigl[ \logit \Bigl( \frac{x}{n+1} \Bigr),
		\logit \Bigl( \frac{x+1}{n+1} \Bigr) \Bigr] ,
\]
and thus it contains the maximum likelihood estimator $\hat{\theta} = \logit(x/n)$ of $\theta = \logit(p)$.

Figure~\ref{Bin1} depicts the p-value function $p \mapsto \pi(5,\logit(p))$ for the binomial model with $n = 20$. The vertical lines show the special parameters $\logit^{-1}(\theta_{k,5})$ for $k \in \{0,1,\ldots,20\} \setminus\{5\}$. The horizonal line at height $\alpha = 0.05$ yields the 95\%-confidence bounds $\logit^{-1}(a_{0.05}^{\rm St}(5)) = 0.104$ and $\logit^{-1}(b_{0.05}^{\rm St}(5)) = 0.475$ for the underlying parameter $p = \logit^{-1}(\theta)$.

Figure~\ref{Bin2} shows for $n = 20$ and $x = 0,1,\ldots,10$ the resulting 95\%-confidence intervals for the parameter $p$ via Sterne's method (blue left bar) and Clopper-Pearson's approach (red right bar). In all cases Sterne's intervals are shorter than the competitors, but this is not a general rule.

\begin{figure}
\centering
\includegraphics[width=0.7\textwidth]{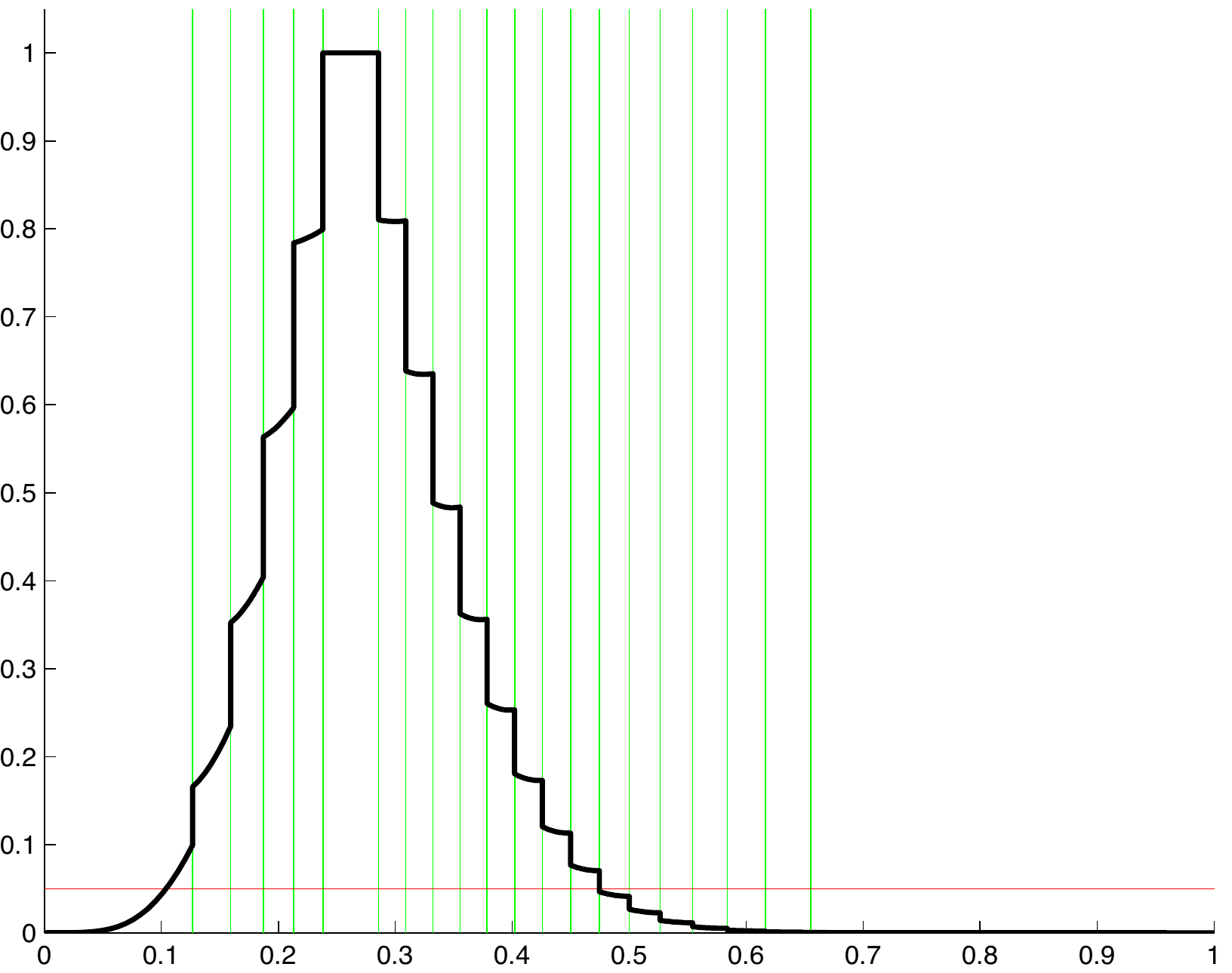}
\caption{The p-value function $p \mapsto \pi(5,\logit(p))$ for the binomial model with $n = 20$.}
\label{Bin1}
\end{figure}

\begin{figure}
\centering
\includegraphics[width=0.7\textwidth]{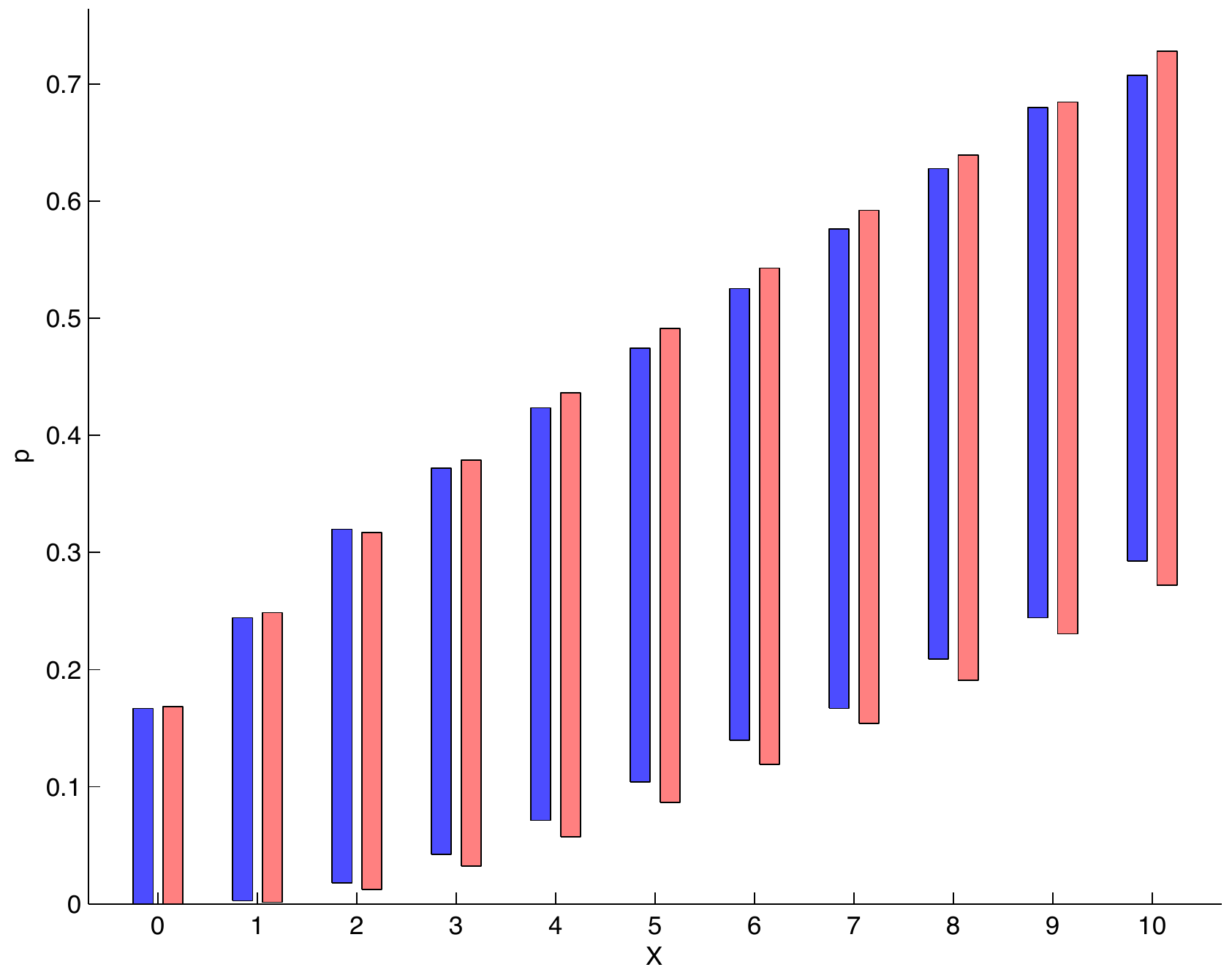}
\caption{95\%-confidence intervals for a binomial parameter $p$, when $n = 20$.}
\label{Bin2}
\end{figure}

\paragraph{Poisson parameters.}
The interval $[\theta_{x-1,x}, \theta_{x+1,x}]$ on which $\pi(x,\cdot) \equiv 1$ equals
\[
	\bigl[ \log(x), \log(x+1) \bigr] ,
\]
so its left endpoint is the maximum likelihood estimator $\hat{\theta} = \log(x)$ of $\theta = \log(\lambda)$.

The Poisson family has another special property. Namely, the p-value function $\pi(x,\cdot)$ is strictly increasing on
\[
	\begin{cases}
		[\theta_{k,x}, \theta_{k+1,x}) & \text{for} \ 0 \le k < x-1 , \\
		(\theta_{k-1,x}, \theta_{k,x}) & \text{for} \ k > x+1 .
	\end{cases}
\]
For if $0 \le k < x-1$, then for $\exp(\theta_{k,x}) \le \lambda < \exp(\theta_{k+1,x})$,
\begin{align*}
	\frac{d}{d\lambda} \, \pi(x, \log\lambda) \
	&= \ \frac{d}{d\lambda}
		\Bigl( 1 - e^{-\lambda} \sum_{j=k+1}^{x-1} \frac{\lambda^j}{j!} \Bigr) \\
	&= \ e^{-\lambda}
		\Bigl( \frac{\lambda^{x-1}}{(x-1)!} - \frac{\lambda^k}{k!} \Bigr) \\
	&= \ e^{-\lambda} \frac{\lambda^k}{k!}
		\Bigl( \frac{\lambda^{x-k-1} k!}{(x-1)!} - 1 \Bigr) ,
\end{align*}
and
\[
	\frac{\lambda^{x-k-1} k!}{(x-1)!}
	\ \ge \ \frac{\exp(\theta_{k,x})^{x-k-1} k!}{(x-1)!}
	\ = \ \exp \Bigl( \log x - \frac{1}{x - k} \sum_{j=k+1}^x \log j \Bigr)
	\ > \ 1 .
\]
Likewise, for $k > x+1$ and $\exp(\theta_{k-1,x}) < \lambda \le \exp(\theta_{k,x})$,
\[
	\frac{d}{d\lambda} \, \pi(x, \log\lambda)
	\ = \ e^{-\lambda}
		\Bigl( \frac{\lambda^{k-1}}{(k-1)!} - \frac{\lambda^x}{x!} \Bigr)
	\ = \ e^{-\lambda} \frac{\lambda^x}{x!}
		\Bigl( \frac{\lambda^{k-1-x} x!}{(k-1)!} - 1 \Bigr) ,
\]
and
\[
	\frac{\lambda^{k-1-x} x!}{(k-1)!}
	\ > \ \frac{\exp(\theta_{k-1,x})^{k-1-x} x!}{(k-1)!}
	\ = \ 1 .
\]

A consequence of this isotonicity property is that the second stage of our algorithm is superfluous when computing the upper bound $\bSt(x)$.

The interval $[\theta_{x-1,x}, \theta_{x+1,x}]$ on which $\pi(x,\cdot) \equiv 1$ equals $\bigl[ \log(x), \log(x+1) \bigr]$, so its left endpoint is the maximum likelihood estimator $\hat{\theta} = \log(x)$ of $\theta = \log(\lambda)$.

Figure~\ref{Poisson1} depicts a part of the p-value function $\lambda \mapsto \pi(3,\log\lambda)$. The vertical lines show the special parameters $\exp(\theta_{k,3})$ for $k = 0,1,2$ and $k = 4,5,\ldots$.

Table~\ref{Poisson2} contains for $x = 0,1,\ldots,15$ the resulting 95\%-confidence intervals for $\lambda = \exp(\theta)$ via Sterne's method and, in brackets, Clopper-Pearson's approach. The latter intervals are longer, except for $x = 0$ and $x = 1$.

\begin{figure}
\centering
\includegraphics[width=0.7\textwidth]{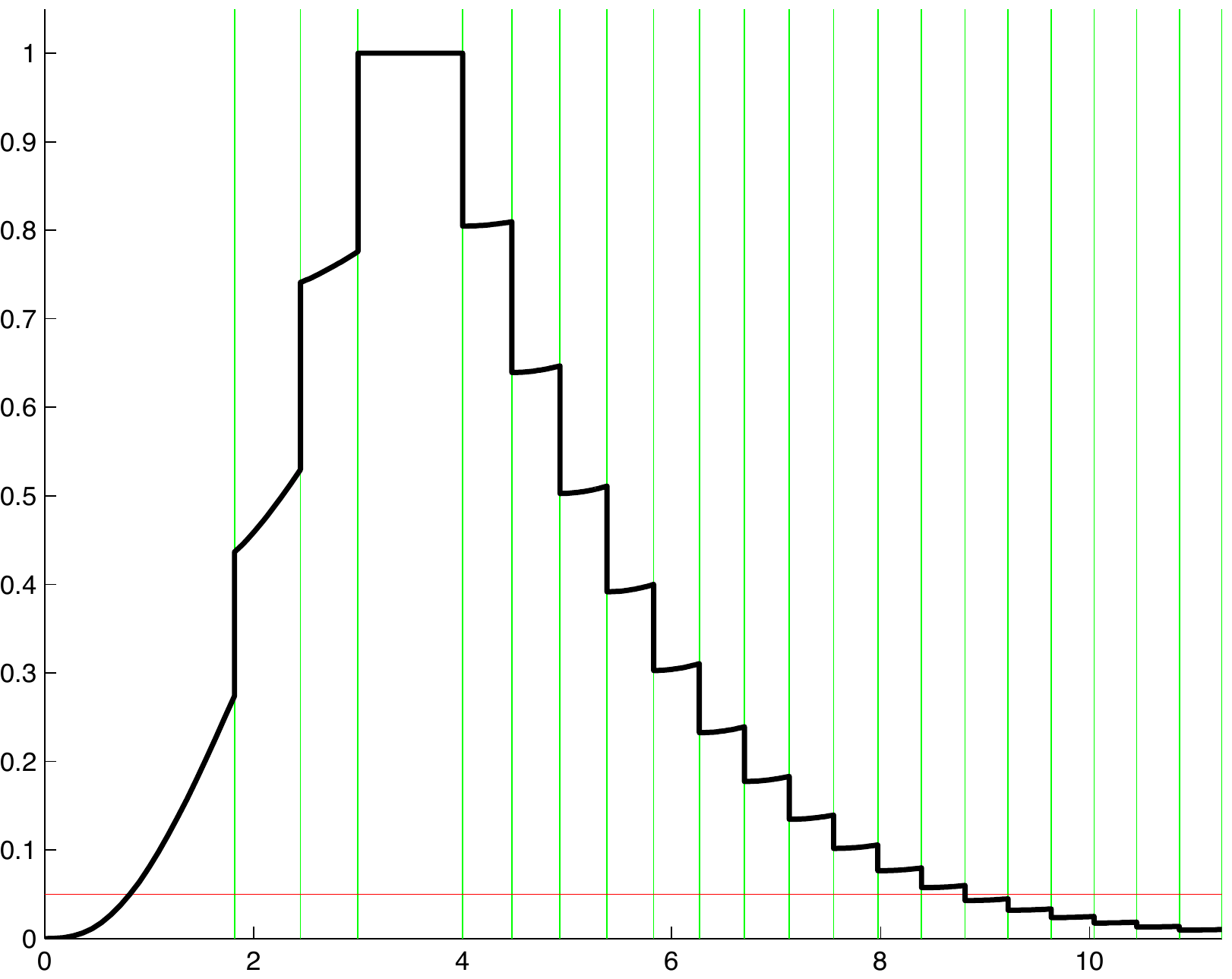}
\caption{The p-value function $\lambda \mapsto \pi(3,\log\lambda)$ for the Poisson model.}
\label{Poisson1}
\end{figure}

\begin{table}
\[
	\begin{array}{|c||cc|c|} \hline
	x & \text{lower b.} & \text{upper b.} & \text{length} \\\hline
	0 &  0.0000  &   3.7644  &   3.7644  \\
	  & (0.0000) &  (3.6889) &  (3.6889) \\\hline
	1 &  0.0512  &   5.7560  &   5.7048  \\
	  & (0.0253) &  (5.5717) &  (5.5464) \\\hline
	2 &  0.3553  &   7.2950  &   6.9397  \\
	  & (0.2422) &  (7.2247) &  (6.9825) \\\hline
	3 &  0.8176  &   8.8077  &   7.9901  \\
	  & (0.6186) &  (8.7673) &  (8.1487) \\\hline
	4 &  1.3663  &  10.3073  &   8.9410  \\
	  & (1.0898) & (10.2416) &  (9.1518) \\\hline
	5 &  1.9701  &  11.7992  &   9.8291  \\
	  & (1.6234) & (11.6684) & (10.0450) \\\hline
	6 &  2.6130  &  13.2862  &  10.6732  \\
	  & (2.2018) & (13.0595) & (10.8577) \\\hline
	7 &  3.2853  &  14.3403  &  11.0550  \\
	  & (2.8143) & (14.4227) & (11.6084) \\\hline
	\end{array}
	\qquad
	\begin{array}{|c||cc|c|} \hline
	x  & \text{lower b.} & \text{upper b.} & \text{length} \\\hline
	 8 &   3.7643  &  15.8198  &  12.0555  \\
	   &  (3.4538) & (15.7632) & (12.3094) \\\hline
	 9 &   4.4601  &  17.2979  &  12.8378  \\
	   &  (4.1153) & (17.0849) & (12.9696) \\\hline
	10 &   5.3233  &  18.3386  &  13.0153  \\
	   &  (4.7953) & (18.3904) & (13.5951) \\\hline
	11 &   5.7559  &  19.8138  &  14.0579  \\
	   &  (5.4911) & (19.6821) & (14.1910) \\\hline
	12 &   6.6857  &  20.8485  &  14.1628  \\
	   &  (6.2005) & (20.9616) & (14.7611) \\\hline
	13 &   7.2949  &  22.3219  &  15.0270  \\
	   &  (6.9219) & (22.2304) & (15.3085) \\\hline
	14 &   8.1020  &  23.7952  &  15.6932  \\
	   &  (7.6539) & (23.4897) & (15.8358) \\\hline
	15 &   8.8076  &  24.8249  &  16.0173  \\
	   &  (8.3953) & (24.7403) & (16.3450) \\\hline
	\end{array}
\]
\caption{95\%-confidence intervals for a Poisson parameter $\lambda$.}
\label{Poisson2}
\end{table}

\paragraph{Odds ratios.}
The interval $[\theta_{x-1,x}, \theta_{x+1,x}]$ on which $\pi(x,\cdot) \equiv 1$ equals
\[
	\Bigl[
		\logit \Bigl( \frac{x}{n_1+1} \Bigr)
			- \logit \Bigl( \frac{s-x+1}{n_2+1} \Bigr) ,
		\logit \Bigl( \frac{x+1}{n_1+1} \Bigr)
			- \logit \Bigl( \frac{s-x}{n_2+1} \Bigr) \Bigr] .
\]
This interval contains $\logit(\hat{\rho})$ with the point estimator
\[
	\hat{\rho} \ := \ \frac{(x+1/2)(n_1 - s + x + 1/2)}{(n_1 - x + 1/2)(s-x+1/2)}
\]
of the odds ratio
\[
	\rho \ := \ \frac{p_1/(1 - p_1)}{p_2/(1 - p_2)} .
\]

We illustrate the presented methods with a case-control study about cervical cancer by Graham and Shotz (1979). Here $n_1 = 49$ mothers aged 50-59 with diagnosed cervical cancer (cases) had been compared with $n_2 = 317$ mothers from the same age group without cancer (controls). The question was whether age at first pregnancy is associated with cervical cancer. In the case group the age of $Y_1 = 42$ mothers at first pregnancy was 25 or less, whereas in the control group the number of young mothers was $Y_2 = 203$. Viewing the observations $Y_i$ as independent random variables with distributions $\Bin(n_i,p_i)$, a point estimator for the odds ratio $\rho$ is given by $\hat\rho \approx 3.1884$. With Sterne's method we obtain the 95\%-confidence interval $[1.4427, 8.0213]$ for $\rho = \exp(\theta)$. The Clopper--Pearson approach yields the larger interval $[1.4332, 9.1586]$. Figure~\ref{OddsRatio} depicts part of the corresponding p-value function $\rho \mapsto \pi({\rm data}, \log\rho)$.

Figure~\ref{OddsRatioCBs} compares Sterne's and Clopper-Pearson's 95\%-confidence intervals for $\log(\rho)$ in case of a two-by-two table
\begin{equation}
	\begin{array}{|c|c||c|} \hline
		x       & 49 - x  & 49 \\\hline
		245 - x & 72 + x  & 317 \\\hline\hline
		245     & 121     & 366 \\\hline
	\end{array}
	\label{2-2-Table}
\end{equation}
for various values of $x$.

\begin{figure}
\centering
\includegraphics[width=0.7\textwidth]{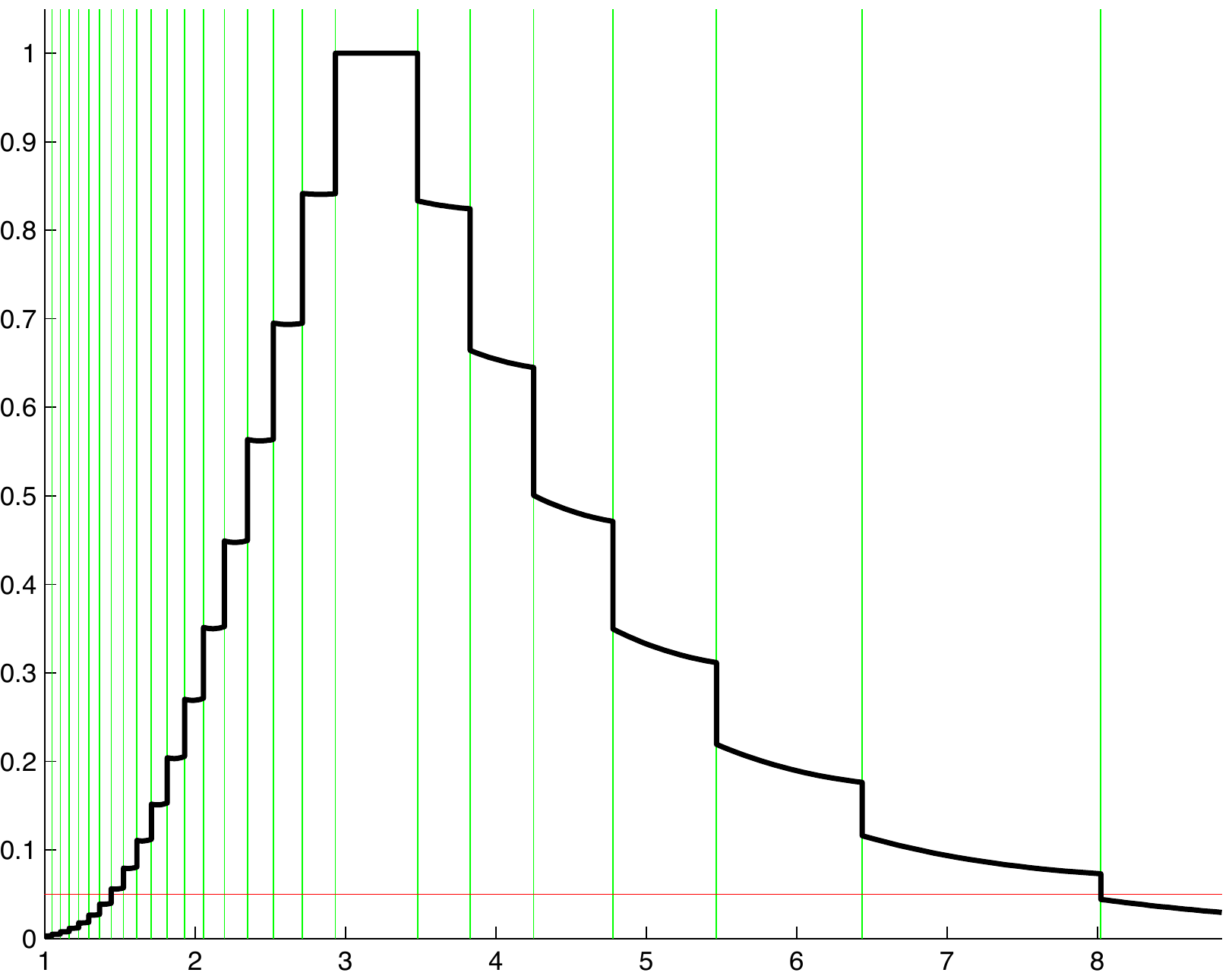}
\caption{The p-value function $\rho \mapsto \pi({\rm data},\log\rho)$ for the cervical cancer data.}
\label{OddsRatio}
\end{figure}

\begin{figure}
\centering
\includegraphics[width=0.7\textwidth]{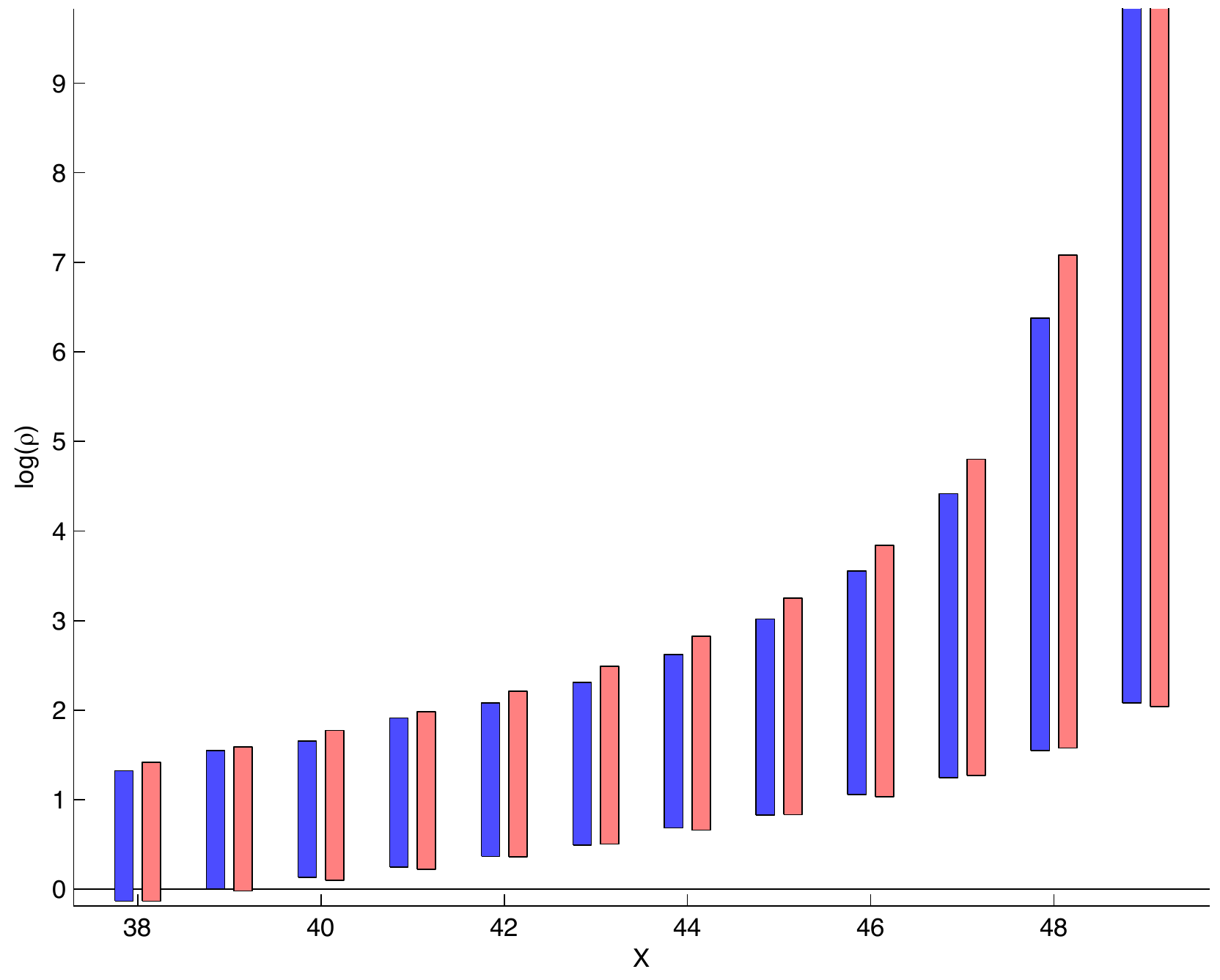}
\caption{95\%-confidence intervals for $\log(\rho)$ based on data as in \eqref{2-2-Table}.}
\label{OddsRatioCBs}
\end{figure}

\section{Proofs}
\label{Proofs}

\begin{proof}[\bf Proof of Theorem~\ref{Analysis}]
Note that for any $y \in \XX \setminus \{x\}$ and $\eta \in \mathbb{R}$, the inequality $f_\eta(y) > f_\eta(x)$ is equivalent to $\log w_y - \log w_x + \eta (y - x) > 0$, and this is satisfied if, and only if,
\[
	\begin{cases}
		\text{either} \ & y > x \ \text{and} \ \eta > \theta_{y,x} , \\
		\text{or} \     & y < x \ \text{and} \ \eta < \theta_{y,x} .
	\end{cases}
\]
Since $\theta_{k,x}$ is strictly increasing in $k \in \XX \setminus \{x\}$, we can conclude the following:

\begin{description}
\item[(i)] For $\theta_{x-1,x} \le \eta \le \theta_{x+1,x}$,
\[
	\bigl\{ y \in \XX : f_\eta(y) > f_\eta(X) \bigr\} \ = \ \emptyset .
\]
\item[(ii)] For $k < x-1$ and $\theta_{k,x} \le \eta < \theta_{k+1,x}$,
\[
	\bigl\{ y \in \XX : f_\eta(y) > f_\eta(x) \bigr\} \ = \ \{k+1,\ldots,x-1\} ,
\]
while for $k > x+1$ and $\theta_{k-1,x} < \eta \le \theta_{k,x}$,
\[
	\bigl\{ y \in \XX : f_\eta(y) > f_\eta(x) \bigr\} \ = \ \{x+1,\ldots,k-1\} .
\]
\end{description}

\noindent
These conclusions yield part~(a) of Theorem~\ref{Analysis}.

As to part~(b), we only prove that $\pi(x,\theta_{k,x})$ is strictly decreasing in $k \in \XX$, $k > x$. For the other assertion can be verified analogously or follows from symmetry considerations. Since $\pi(x,\theta_{x+1,x}) = 1 > 1 - f_{\theta_{x+2,x}}(x) = \pi(x,\theta_{x+2,x})$ in case of $x+2 \in \XX$, it suffices to show that $\pi(x,\theta_{k,x}) > \pi(x,\theta_{k+1,x})$ whenever $k > x+1$ and $k+1 \in \XX$.

For this purpose we assume without loss of generality that $x = 0$ and $w_0 = w_k = 1$. For otherwise one might consider the shifted support $\tilde{\XX} := \{y - x : y \in \XX\}$ and the transformed weights $\tilde{w}_z := \exp(\log w_{x+z} - \log w_x + z \theta_{k,x})$.

Now we have to prove that $\pi(0,0) > \pi(0,\theta_{k+1,0})$, which is equivalent to
\[
	\frac{\sum_{y=1}^{k-1} w_y}{\sum_{z \in \mathbb{Z}} w_z} \
	< \ \frac{\sum_{y=1}^{k} w_y e^{\delta y}}
		{\sum_{z \in \mathbb{Z}} w_z e^{\delta z}} ,
\]
where $\delta := \theta_{k+1,0} = - \log w_{k+1}/(k+1) > 0$. The latter assertion is equivalent to
\[
	\frac{\sum_{y=1}^{k-1} w_y}{\sum_{z \in \mathbb{Z} \setminus [1,k-1]} w_z} \
	< \ \frac{\sum_{y=1}^{k} w_y e^{\delta y}}
		{\sum_{z \in \mathbb{Z} \setminus [1,k]} w_z e^{\delta z}} .
\]
Since $\sum_{z \le 0} w_z \ge \sum_{z \le 0} w_z e^{\delta z}$, it suffices to show that
\[
	\frac{\sum_{y=1}^{k-1} w_y}{\sum_{z=k}^\infty w_z} \
	< \ \frac{\sum_{y=1}^{k} w_y e^{\delta y}}{\sum_{z=k+1}^\infty w_z e^{\delta z}} .
\]
Rearranging the enumerators and denominators yields the equivalent assertion that
\[
	A \ := \ \frac{\sum_{z = k+1}^\infty w_z e^{\delta z}}
		{\sum_{z = k}^\infty w_z}
	\ < \ B \ := \ \frac{\sum_{y=1}^{k} w_y e^{\delta y}}{\sum_{y=1}^{k-1} w_y} .
\]

When considering these fractions $A$ and $B$ more closely, one realizes that in order to maximize $A$, one should shift the probability distribution $\sum_{z=k}^\infty w_z \mathrm{Dirac}_z \big/ \sum_{z=k}^\infty w_z$ as far to the right as possible, while $B$ is minimized by shifting the probability distribution $\sum_{y=1}^k w_y \mathrm{Dirac}_y \big/ \sum_{y=1}^k w_y$ as far to the left as possible. But these shiftings should happen under the constraints that $w_k = 1$ and $\log w_{k+1} = - (k+1)\delta$ remain fixed, and $z \mapsto \log w_z$ remains concave in $z \ge 1$. Under these constraints, the following weights $\bar{w}_z$ represent an extremal configuration:
\[
	\bar{w}_z \ := \ \exp \bigl( (k - z) (k + 1) \delta \bigr) , \quad z \ge 1 .
\]
Figure~\ref{ExtremalW} scetches an example of the functions $\mathbb{N}_0 \ni z \mapsto \log w_z$ and $\mathbb{N} \ni z \mapsto \log\bar{w}_z$ when $k = 6$.

\begin{figure}
\centering
\includegraphics[width=0.7\textwidth]{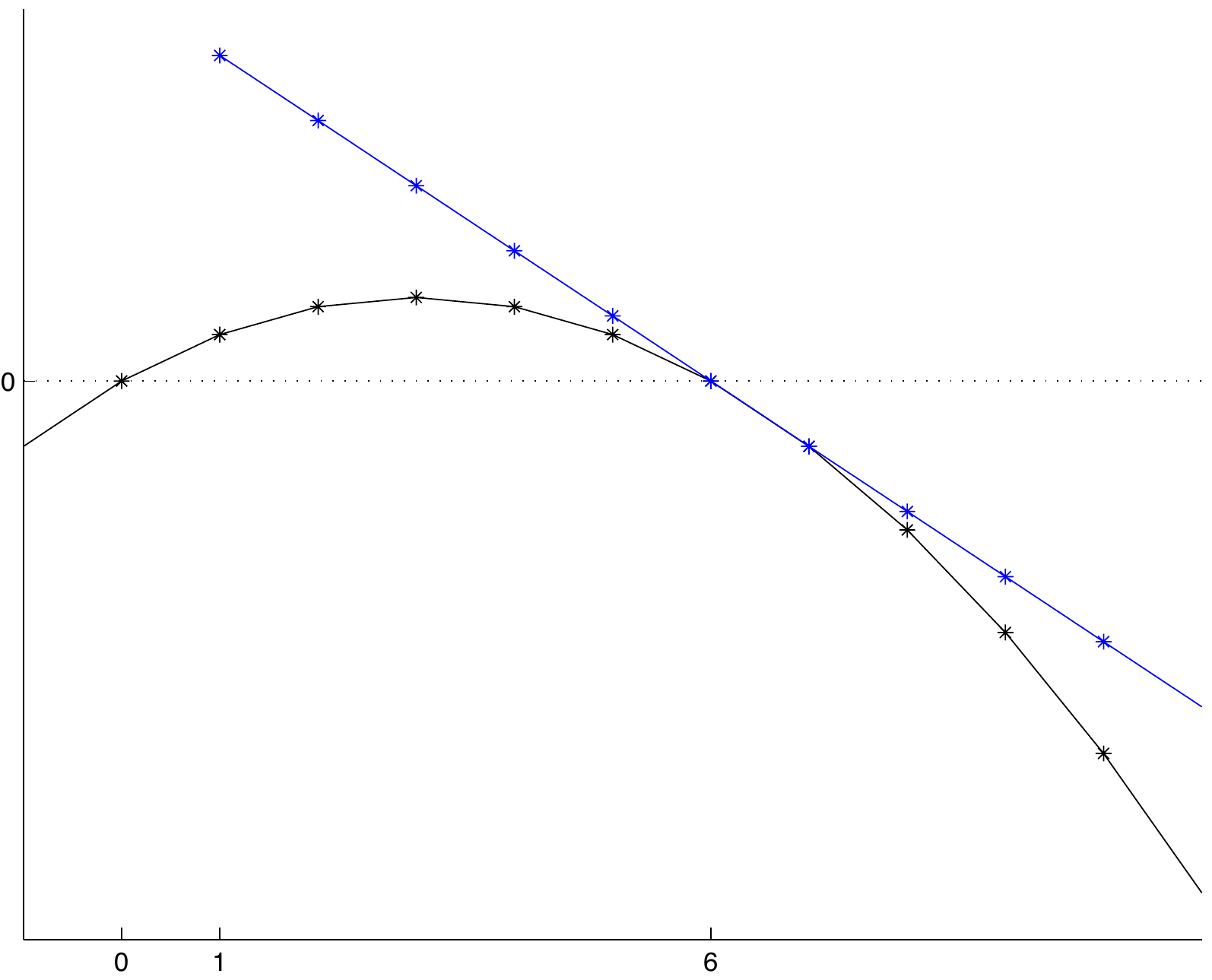}
\caption{The functions $\mathbb{N}_0 \ni z \mapsto \log w_z$ and $\mathbb{N} \ni z \mapsto \log\bar{w}_z$ in case of $k = 6$.}
\label{ExtremalW}
\end{figure}

A formal proof that replacing $w_z$ with $\bar{w}_z$ would increase $A$ and decrease $B$ uses a standard argument for distributions with monotone likelihood ratios: Note that $f_y := w_y / \sum_{z \ge k} w_z$ may be written as $r_y \bar{f}_y$ with $\bar{f}_y := \bar{w}_y / \sum_{z \ge k} \bar{w}_z$ and $r_y$ non-increasing in $y \ge k$. Moreover, $\sum_{y \ge k} f_y = \sum_{y \ge k} \bar{f}_y = 1$, while $g_y := 1_{[y > k]} e^{\delta y}$ is non-decreasing in $y \ge k$. Thus we may pick an index $a \ge k$ with $r_a \ge 1 \ge r_{a+1}$ and conclude that
\begin{align*}
	\frac{\sum_{z = k+1}^\infty w_z e^{\delta z}}{\sum_{z=k}^\infty w_z} \
	&= \ \sum_{y \ge k} f_y g_y \\
	&= \ \sum_{y \ge k} \bar{f}_y g_y + \sum_{y \ge k} (f_y - \bar{f}_y) g_y \\
	&= \ \sum_{y \ge k} \bar{f}_y g_y + \sum_{y \ge k} \bar{f}_y (r_y - 1) (g_y - g_a) \\
	&\le \ \sum_{y \ge k} \bar{f}_y g_y \ =: \ \bar{A} ,
\end{align*}
because $(r_y - 1)(g_y - g_a) \le 0$ for all $y \ge k$. But
\[
	\bar{A} \
	= \ \frac{\sum_{y \ge k+1} \exp(- y k \delta)}{\sum_{y \ge k} \exp(- y(k+1)\delta)}
	\ = \ \frac{1 - \exp(- (k+1) \delta)}{1 - \exp(- k \delta)} .
\]

With similar arguments applied to $f_y := w_y e^{\delta y} / \sum_{z = 1}^k w_z e^{\delta z}$, $\bar{f}_y := \bar{w}_y e^{\delta y} / \sum_{z = 1}^k \bar{w}_z e^{\delta z}$ and $g_y := 1\{y < k\} e^{-\delta y}$ one can show that $B$ is not smaller than
\[
	\bar{B}
	\ := \ 1 \Big/ \sum_{y=1}^k \bar{f}_y g_y
	\ = \ \frac{\sum_{y=1}^k \exp(- y k \delta)}{\sum_{y=1}^{k-1} \exp(- y (k+1) \delta)}
	\ = \ \frac{\exp(\delta) - \exp(- (k^2 - 1) \delta)}{1 - \exp(- (k^2 - 1) \delta)}
		\cdot \bar{A} ,
\]
which is strictly larger than $\bar{A}$.

It remains to prove part~(c). Note that this assertion is a consequence of the following statement: For arbitrary indices $a, b \in \XX$ with $a \le b$, the function
\[
	\theta \ \mapsto \ \log \sum_{x=a}^b f_\theta(x)
\]
is strictly concave, unless $\XX = \{a,\ldots,b\}$. A sufficient condition for strict concavity is the second derivative being strictly negative, and elementary calculations reveal the equivalent assertion that
\[
	\Var_\theta(X \,|\, X \in [a,b]) \ < \ \Var_\theta(X) ,
\]
unless $\XX = \{a,\ldots,b\}$. Since $\Var_\theta(X \,|\, X \in [a,b]) \to \Var_\theta(X)$ as $a \to \sup(\XX)$ and $b \to \inf(\XX)$, it suffices to show that
\begin{equation}
	\Var_\theta(X \,|\, X \in [a,b]) \ < \ \begin{cases}
		\Var_\theta(X \,|\, X \in [a-1,b])	&	\text{if} \ a-1 \in \XX , \\
		\Var_\theta(X \,|\, X \in [a,b+1])	&	\text{if} \ b+1 \in \XX .
	\end{cases}
	\label{Crucial}
\end{equation}
At this point it should be stressed that these inequalities, though looking plausible at first sight, may be violated in case of arbitrary weights $w_x > 0$, unless $a = b$. The concavity of $x \mapsto \log w_x$ turns out to be important. 

Now we restrict our attention to the nontrivial case that $a < b$. Let $c$ be the additional point in \eqref{Crucial}, i.e.\ $c = a-1$ or $c = b+1$. Further let $P_o := \LL_\theta(X \,|\, X \in [a,b])$ and $P = \LL_\theta(X \,|\, X \in [a,b] \cup \{c\})$. With $\lambda := P\{c\} \in (0,1)$ one may write
\[
	P \ = \ (1 - \lambda) P_o + \lambda \delta_c ,
\]
where $\delta_c$ denotes Dirac measure at $c$, and straightforward calculations yield
\[
	\Var(P) \ = \ (1 - \lambda) \Var(P_o) + \lambda(1 - \lambda) (c - {\rm Mean}(P_o))^2 .
\]
Thus it suffices to show that
\begin{equation}
	(1 - \lambda) (c - {\rm Mean}(P_o))^2 \ > \ \Var(P_o) .
	\label{Crucial2}
\end{equation}
It follows from Lemma~\ref{MomentInequality1} below that there exists a distribution $P_*$ on $\{a,\ldots,b\}$ having weights $P_*\{x\} = \exp(\gamma + \delta x)$ such that ${\rm Mean}(P_*) = {\rm Mean}(P_o)$, $\Var(P_*) \ge \Var(P_o)$ and $P_*\{x\} \ge P_o\{x\}$ for $x = a,b$. Hence it suffices to verify \eqref{Crucial2} with $P_*$ in place of $P_o$. Replacing $P$ with $(1 - \lambda)P_* + \lambda \delta_c$ does not alter the concavity of $x \mapsto \log P\{x\}$, so that \eqref{Crucial2} gets most difficult to verify in case of $P$ having weights $P\{x\} = \exp(\tilde{\gamma} + \delta x)$ for $x \in \{a,\ldots,b\} \cup \{c\}$. But then assertion \eqref{Crucial2} follows from Lemma~\ref{MomentInequality2}.
\end{proof}

\begin{Lemma}
\label{MomentInequality1}
Let $P_o$ be a probability distribution on $\{a,\ldots,b\}$ with $a < b$ such that $x \mapsto \log P_o\{x\}$ is concave and real-valued. Then there exist unique real constants $\gamma, \delta$ such that $P_*\{x\} := \exp(\gamma + \delta x)$ defines another probability distribution on $\{a,\ldots,b\}$ with ${\rm Mean}(P_*) = {\rm Mean}(P_o)$. This distribution satisfies the inequalities $\Var(P_*) \ge \Var(P_o)$ and $P_*\{x\} \ge P_o\{x\}$ for $x = a, b$.
\end{Lemma}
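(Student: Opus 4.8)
The plan is to first settle existence and uniqueness of $(\gamma,\delta)$, and then to obtain both the endpoint inequalities and the variance inequality from a single sign-pattern analysis of the likelihood ratio $P_o/P_*$. For existence and uniqueness I would consider, for $\delta\in\R$, the quantity $Z(\delta):=\sum_{x=a}^{b}e^{\delta x}$ and the log-linear distribution $P_\delta\{x\}:=e^{\delta x}/Z(\delta)$ on $\{a,\ldots,b\}$. Its mean ${\rm Mean}(P_\delta)=(\log Z)'(\delta)$ has derivative $\Var(P_\delta)>0$ (here $a<b$ enters), so $\delta\mapsto{\rm Mean}(P_\delta)$ is a continuous, strictly increasing bijection from $\R$ onto $(a,b)$, with limits $a$ and $b$ as $\delta\to\mp\infty$. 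Since $\log P_o$ is real-valued, $P_o$ assigns positive mass to every point of $\{a,\ldots,b\}$, hence ${\rm Mean}(P_o)\in(a,b)$; so there is exactly one $\delta$ with ${\rm Mean}(P_\delta)={\rm Mean}(P_o)$, and then $\gamma:=-\log Z(\delta)$ is forced by normalization. Set $P_*:=P_\delta$.

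Next I would introduce $g(x):=P_o\{x\}/P_*\{x\}-1$, so that $P_*\{x\}\,g(x)=P_o\{x\}-P_*\{x\}$ and therefore $\sum_x P_*\{x\}\,g(x)=0$ and $\sum_x x\,P_*\{x\}\,g(x)={\rm Mean}(P_o)-{\rm Mean}(P_*)=0$. The structural input is that $\log\bigl(1+g(x)\bigr)=\log P_o\{x\}-\gamma-\delta x$ is concave in $x$ (a concave function minus an affine one), so the superlevel set $I:=\{x\in\{a,\ldots,b\}:g(x)\ge 0\}$ is an interval of integers. If $P_o=P_*$ all assertions are trivial, so assume $P_o\ne P_*$; then $g\not\equiv 0$, and together with $\sum_x P_*\{x\}\,g(x)=0$ this forces $g$ to be strictly positive somewhere and strictly negative somewhere, so that $I$ is nonempty but is not all of $\{a,\ldots,b\}$; write $I=\{p,\ldots,q\}$. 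I claim $a\notin I$ and $b\notin I$. Indeed, if $a\in I$ then $I=\{a,\ldots,q\}$ with $q<b$ and $g<0$ on the nonempty set $\{q+1,\ldots,b\}$, whence $\bigl(x-q-\tfrac{1}{2}\bigr)g(x)\le 0$ for every $x$, with strict inequality for $x\ge q+1$, so that $0=\sum_x\bigl(x-q-\tfrac{1}{2}\bigr)P_*\{x\}\,g(x)<0$, a contradiction; the case $b\in I$ is symmetric. Consequently $a<p\le q<b$, and in particular $g(a)<0$ and $g(b)<0$, i.e.\ $P_*\{a\}>P_o\{a\}$ and $P_*\{b\}>P_o\{b\}$, which establishes the endpoint inequalities.

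For the variance I would reuse this sign pattern: $g\ge 0$ on $\{p,\ldots,q\}$ while $g<0$ on $\{a,\ldots,p-1\}$ and on $\{q+1,\ldots,b\}$. Choosing $\rho_1\in(p-1,p)$ and $\rho_2\in(q,q+1)$ and putting $\phi(x):=(x-\rho_1)(x-\rho_2)$, one checks directly that $\phi(x)\,g(x)\le 0$ for all $x\in\{a,\ldots,b\}$. Since $\phi$ is a monic quadratic polynomial and $\sum_x P_*\{x\}\,g(x)=\sum_x x\,P_*\{x\}\,g(x)=0$, the affine part of $\phi$ contributes nothing, and since the two distributions have the same mean,
\[
	\Var(P_o)-\Var(P_*)\;=\;\sum_x x^2\bigl(P_o\{x\}-P_*\{x\}\bigr)\;=\;\sum_x\phi(x)\,P_*\{x\}\,g(x)\;\le\;0 ,
\]
which is the asserted inequality $\Var(P_*)\ge\Var(P_o)$.

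I expect the sign-pattern step to be the main obstacle: it is the point at which log-concavity of $P_o/P_*$ (which confines $\{g\ge 0\}$ to an interval) must be combined with the two moment identities to force that interval to lie strictly inside $\{a,\ldots,b\}$. This one observation simultaneously delivers the endpoint inequalities and produces the monic quadratic multiplier needed to compare the variances; without the concavity hypothesis these inequalities can genuinely fail, as the surrounding discussion already notes.
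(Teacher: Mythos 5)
Your proposal is correct and follows essentially the same route as the paper: existence and uniqueness via monotonicity of the mean in the exponential tilt, then the log-concavity of $\log P_o\{x\}-\gamma-\delta x$ to confine the sign changes of $P_o-P_*$, with linear and monic quadratic test functions combined with the matched zeroth and first moments to obtain the endpoint and variance inequalities. The only difference is organizational (you extract the two-crossing pattern once and read off both conclusions, while the paper first proves the endpoint inequalities by a one-crossing contradiction and then the variance inequality), which does not change the substance of the argument.
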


\begin{Lemma}
\label{MomentInequality2}
For a fixed number $\delta \in \mathbb{R}$ and any integer $m \ge 0$, let $Q_m$ be the probability distribution on $\{0,\ldots,m\}$ with weights $Q_m\{x\} = c_m e^{\delta x}$. Then $\Var(Q_m)$ is strictly increasing in $m$.
\end{Lemma}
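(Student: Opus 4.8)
The plan is to derive a closed form for $\Var(Q_m)$ as a function of $\delta$ and then to reduce the claimed monotonicity in $m$ to the monotonicity of a single elementary function of one real variable.

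First I would make two trivial reductions. Relabelling $x \mapsto m - x$ turns $Q_m$ with parameter $\delta$ into $Q_m$ with parameter $-\delta$ and leaves the variance unchanged, so it suffices to treat $\delta \ge 0$. If $\delta = 0$, then $Q_m$ is the uniform distribution on $\{0,\dots,m\}$, whence $\Var(Q_m) = m(m+2)/12$, which is obviously strictly increasing in $m$. Hence from now on I assume $\delta > 0$.

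Next, $Q_m$ is a one-parameter exponential family with natural parameter $\delta$ and log-partition function $A_m(\delta) := \log \sum_{x=0}^m e^{\delta x} = \log(e^{(m+1)\delta}-1) - \log(e^\delta - 1)$, so that $\Var(Q_m) = A_m''(\delta)$ -- the same identity already used in the proof of Theorem~\ref{Analysis}~(c). Differentiating twice yields
\[
	\Var(Q_m) \ = \ \frac{e^{-\delta}}{(1 - e^{-\delta})^2}
		\ - \ \frac{(m+1)^2 e^{-(m+1)\delta}}{(1 - e^{-(m+1)\delta})^2} ;
\]
as a sanity check, for $m = 1$ the right-hand side collapses to $e^{-\delta}/(1 + e^{-\delta})^2 = p(1-p)$ with $p = (1 + e^{-\delta})^{-1}$. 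Only the subtracted term depends on $m$. Writing it as $\delta^{-2} g\bigl((m+1)\delta\bigr)$ with
\[
	g(u) \ := \ \frac{u^2 e^{-u}}{(1 - e^{-u})^2} , \qquad u > 0 ,
\]
the desired inequality $\Var(Q_{m+1}) > \Var(Q_m)$ follows immediately, since $(m+1)\delta < (m+2)\delta$, provided $g$ is strictly decreasing on $(0,\infty)$.

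It remains to prove that $g$ is strictly decreasing. From $\log g(u) = 2\log u - u - 2\log(1 - e^{-u})$ one gets $(\log g)'(u) = 2/u - 1 - 2/(e^u - 1)$, and multiplying by $u(e^u - 1) > 0$ shows that $(\log g)'(u) < 0$ is equivalent to $\varphi(u) > 0$, where $\varphi(u) := (2 + u) - (2 - u)e^u$. Now $\varphi(0) = 0$, while $\varphi'(u) = 1 - (1-u)e^u$ with $\varphi'(0) = 0$, and $\varphi''(u) = u e^u > 0$ for $u > 0$; hence $\varphi'$ is strictly positive on $(0,\infty)$, and therefore so is $\varphi$, which is what was needed. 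I expect no genuine obstacle here: the only point requiring a little care is the derivation of the closed form above, which -- like the inequality \eqref{Crucial} in the proof of Theorem~\ref{Analysis} -- hinges on the weights being exactly log-linear, but that is precisely the hypothesis of the lemma.
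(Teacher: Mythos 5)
Correct, and essentially the paper's approach: both arguments reduce the claim to a closed form for the variance, namely $\Var(Q_m) = 1/(2\cosh\delta - 2) - (m+1)^2/\bigl(2\cosh((m+1)\delta)-2\bigr)$, which is exactly the paper's $B^2 - B - (m+1)^2/\bigl(2\cosh((m+1)\delta)-2\bigr)$ since $B^2 - B = e^\delta/(e^\delta-1)^2$; the only difference is that you obtain it by differentiating the log-normalizer $A_m(\delta)$ while the paper Taylor-expands the moment generating function. Your explicit verification that $u \mapsto u^2 e^{-u}/(1-e^{-u})^2$ is strictly decreasing on $(0,\infty)$ (via the auxiliary function $\varphi$) supplies a rigorous proof of the final monotonicity step, which the paper states without detail, so your write-up is if anything slightly more complete.
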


\begin{proof}[\bf Proof of Lemma~\ref{MomentInequality1}]
For $\delta \in \mathbb{R}$ let $P_\delta$ be the probability distribution on $\{a,\ldots,b\}$ with weights $P_\delta\{x\} = \exp(\gamma_\delta + \delta x)$. It follows from elementary calculations or standard theory for exponential families that ${\rm Mean}(P_\delta)$ is continuous and strictly increasing in $\delta$. Moreover, $\lim_{\delta \to -\infty} {\rm Mean}(P_\delta) = a$ and $\lim_{\delta \to \infty} {\rm Mean}(P_\delta) = b$. Thus there exists a unique $\delta \in \mathbb{R}$ such that the means of $P_* = P_\delta$ and $P_o$ coincide.

Suppose that $P_*\{a\} < P_o\{a\}$. Since $x \mapsto \log P_o\{x\}$ is concave and $x \mapsto \log P_*\{x\}$ is linear, there exists a number $s \in (a,b)$ such that
\[
	P_*\{x\} \ \begin{cases}
		< \ P_o\{x\} & \text{if} \ x < s , \\
		> \ P_o\{x\} & \text{if} \ x > s .
	\end{cases}
\]
But this entails the contradiction that
\[
	{\rm Mean}(P_*) - {\rm Mean}(P_o)
	\ = \ \sum_{x=a}^b (x - s) \bigl( P_*\{x\} - P_o\{x\} \bigr)
	\ > \ 0 ,
\]
whence $P_*\{a\} \ge P_o\{a\}$. Analogously on can prove that $P_*\{b\} \ge P_o\{b\}$.

It remains to show that $\Var(P_*)$ is not smaller than $\Var(P_o)$. Since $P_*\{x\} \ge P_o\{x\}$ for $x = a,b$, there exist numbers $a \le r < s \le b$ such that
\[
	P_*\{x\} \ \begin{cases}
		\ge \ P_o\{x\} & \text{if} \ x \in \{a,\cdots,b\} \setminus (r,s) , \\
		\le \ P_o\{x\} & \text{if} \ x \in \{a,\cdots,b\} \cap (r,s) .
	\end{cases}
\]
Since the means of $P_o$ and $P_*$ coincide, $\Var(P_*) - \Var(P_o)$ equals
\[
	\sum_{x=a}^b x^2 \bigl( P_*\{x\} - P_o\{x\} \bigr)
	\ = \ \sum_{x=a}^b (x - r)(x - s) \bigl( P_*\{x\} - P_o\{x\} \bigr)
	\ \ge \ 0 .
\]\\[-7ex]
\end{proof}

\begin{proof}[\bf Proof of Lemma~\ref{MomentInequality2}]
In case of $\delta = 0$, $Q_m$ is the uniform distribution on $\{0,1,\ldots,m\}$ with mean $m/2$ and variance $m(m+2)/12$, which is obviously strictly increasing in $m$. Thus it suffices to consider the case $\delta \ne 0$. Here the normalizing constant $c_m$ is given by $(e^\delta - 1)/(e^{(m+1)\delta} - 1)$, and the moment generating function of $Q_m$ equals
\begin{align*}
	H_m(t) \
	&:= \ \sum_{x=0}^m e^{tx} Q_m\{x\} \\
	&= \ c_m \frac{e^{(m+1)(t+\delta)} - 1}{e^{t+\delta} - 1} \\
	&= \ c_m \frac{e^{(m+1)\delta} - 1 + e^{(m+1)\delta} (e^{(m+1)t} - 1)}
		{e^\delta - 1 + e^\delta (e^t - 1)} \\
	&= \ \frac{1 + A_m (e^{(m+1)t} - 1)}{1 + B (e^t - 1)} ,
\end{align*}
where $A_m := e^{(m+1)\delta} / (e^{(m+1)\delta} - 1)$ and $B = e^\delta / (e^\delta - 1)$. Now we compute a Taylor expansion of $H_m(t)$ at $t = 0$:
\begin{align*}
	H_m(t) \
	= & \frac{1 + A_m (m+1) t + A_m (m+1)^2 t^2/2 + O(t^3)}
		{1 + B t + B t^2/2 + O(t^3)} \\
	= & \bigl( 1 + A_m (m+1) t + A_m (m+1)^2 t^2/2 + O(t^3) \bigr) \\
	  & \cdot \ \bigl( 1 - B t - B t^2/2 + B^2 t^2 + O(t^3) \bigr) \\
	= & 1 + (A_m (m+1) - B) t
		+ \bigl( 2 B^2 - B + A_m (m+1)^2 - 2 A_m (m+1) B \bigr) t^2/2 \\
	  & + \ O(t^3)	\quad\text{as} \ t \to 0 .
\end{align*}
Thus the first and second moment of $Q_m$ are given by $A_m (m+1) - B$ and $A_m (m+1)^2 + 2 B^2 - B - 2 A_m (m+1) B$, respectively. Hence
\begin{align*}
	\Var(Q_m) \
	&= \ A_m (m+1) + 2 B^2 - B - 2 A_m (m+1) B - (A_m (m+1) - B)^2 \\
	&= \ B^2 - B + (m+1)^2 A_m (1 - A_m) \\
	&= \ B^2 - B - \frac{(m+1)^2}{2 \cosh((m+1)\delta) - 2} ,
\end{align*}
which is strictly increasing in $m$.
\end{proof}

\paragraph{Acknowledgements.}
The author is grateful to Christoph Frohn (L\"ubeck) for drawing his attention to the method of Armsen~(1955) and confidence limits for odds ratios. Many thanks also to Kaspar Rufibach (Basel) for discussions and useful hints, to Lutz Mattner (Trier) for his interest in this manuscript, and to Anna L.\ Gr{\"u}tter (Bern) for spotting some typos in previous versions.


\end{document}